\newtheorem{lemma}{Lemma}
\newtheorem{theorem}{Theorem}
\title{Scheduling Autonomous Vehicle Platoons Through 
       an Unregulated Intersection}
\author[1]{Juan Jos\'e~Besa~Vial}
\author[1]{William E.~Devanny}
\author[1]{David Eppstein}
\author[1]{Michael T.~Goodrich}
\affil[1]{Computer Science Department, University of California, Irvine, USA}
\begin{document}

\maketitle

\begin{abstract}
We study various versions of the problem of scheduling
platoons of autonomous vehicles through an unregulated intersection,
where an algorithm must schedule which 
platoons should wait so that others can go through, so as to minimize
the maximum delay for any vehicle.
We provide polynomial-time algorithms for constructing
such schedules for a $k$-way merge intersection,
for constant $k$, and for a crossing intersection involving two-way traffic.
We also show that the more general problem of
scheduling autonomous platoons through an intersection that includes
both a $k$-way merge, for non-constant $k$, and a crossing of two-way traffic
is NP-complete.
\end{abstract}

\section{Introduction}
The advent of autonomous vehicles is introducing a number
of interesting algorithmic questions concerned with how to
coordinate the motion of such vehicles, especially through
unregulated intersections (e.g., 
see~\cite{A6859163,altche2016time,%
au2010motion,Berger:2010,B1411418,C6728285,Dasler2015,Dresner:2004,%
Dresner:2005,dresner2008multiagent,F1429220,%
IlginGuler2014121,L5940562,L6121907,L6957676,M7039600,N683216,%
VanMiddlesworth:2008,Wuthishuwong2015}).
Such an intersection would not have any stop signs or lights and 
would instead rely on algorithmic coordination 
between the autonomous vehicles
approaching the intersection in order to 
prevent collisions.

In addition to intersection management, another interesting algorithmic
development for autonomous vehicle
control is the use of \emph{platoons}, where
a sequence of autonomous vehicles operates in close proximity, 
much like the cars of a locomotive train, so as to save time and/or energy.
(E.g., see~\cite{A4020351,IlginGuler2014121,Rajamani200115,S1468255}.)
Ideally, we would like to keep platoons as contiguous sequences of vehicles,
even as they are traveling through an intersection.

Thus, we are interested in this paper in algorithms 
for solving the problem of scheduling autonomous vehicle platoons through
an unregulated intersection, so as to minimize the maximum delay 
for any vehicle (due to waiting in traffic at the intersection) while 
keeping platoons as contigous sequences of vehicles.
The algorithms we describe are agnostic about whether times and locations are continuous variables or (following the discrete framework of Dasler and Mount~\cite{Dasler2015}) discretized
to be integers, which we may consider as normalized such that each platoon moves one unit of distance in one unit of time.
No two vehicles are allowed to occupy the same point at the same time,
but platoons advance in ``lock step'', with all vehicles within a platoon moving the same distance as each other in each time unit.
For continuous models of time and space, we obtain strongly polynomial time bounds,
and for discrete models we obtain time bounds that are polynomial both in the number of platoons and in the logarithm of the total travel time.

Although online scheduling algorithms would also be of interest,
in this paper, we focus on the offline scheduling problem, where
we are given in advance the location and path for each platoon
wishing to travel through a given intersection.

\subsection{Related Work}
Prior related work on autonomous vehicle coordination through 
an intersection is usually referred to as 
\emph{autonomous intersection management}, with most of the previous
work focused on low-level sensor, multi-agent, and acceleration/braking control
algorithms (e.g., see~\cite{Dresner:2004,Dresner:2005,dresner2008multiagent,%
L6121907,L6957676,M7039600,VanMiddlesworth:2008})
or high-level management policies and strategies
(e.g., see~\cite{A6859163,C6728285,N683216,Wuthishuwong2015}).

Closer to the mid-level approach that we take in this paper,
Guler {\it et al.}~\cite{IlginGuler2014121} study the
problem of scheduling platoons through an unregulated intersection, but they
focus on the problem of minimizing the total number of stops 
for all vehicles or the total
delay for all vehicles, e.g., in simple first-in/first-out strategies,
rather than minimizing the maximum delay for any vehicle.
Also in this mid-level framework,
Dasler and Mount~\cite{Dasler2015} 
build on the work of Berger and Klein~\cite{Berger:2010} for solving a
geometric version of the ``Frogger'' video game to
study the problem of routing
variable-length cars (which could also model platoons) 
through multiple intersections.
They introduce a discrete model for autonomous vehicle scheduling, which,
as we mentioned above, we use in this paper. 
They show that the problem of scheduling vehicles through an arbitrary grid
configuration of multiple intersections to minimize the maximum
delay for any vehicle is NP-complete. 
Such a result is also implied by the work of
Hatzack and Nebel~\cite{hatzack2014operational}
on modeling traffic scheduling as job-shop scheduling with blocking.
These hardness results do not apply to 
the single intersection problem that we study in this paper, however, 
because their proofs require interactions between multiple intersections.
Dasler and Mount~\cite{Dasler2015} 
also give several polynomial-time
algorithms for special cases in which horizontally-traveling vehicles must
always yield to and never block vertically-traveling vehicles, but these
algorithms
similarly do not apply to the problems we study in this paper, since we don't 
assign different priorities to different platoons.

In the problems that we study in this paper, platoons must always move
monotonically, that is, they may move forward or stop, but they may not
back up. If platoons are allowed both forward and backward movements, then
path planning becomes much harder. See, e,g., the PSPACE-completeness
proofs of Hearn and Demaine for 
various traffic-clearing problems~\cite{HEARN200572}.

\subsection{Our Contributions}
In this paper,
we study various versions of the problem of scheduling
platoons of autonomous vehicles through an unregulated intersection,
where the set of such platoons and their paths are given in advance.
The optimization goal in the problems that we study is 
to minimize the maximum delay for any vehicle.
We provide polynomial-time algorithms for constructing
such schedules for a $k$-way merge intersection,
for constant $k$, and for a crossing intersection involving two-way traffic.
Our solutions are based on novel uses of dynamic
programming and parametric search techniques.

We also show that the more general problem of scheduling autonomous platoons 
through an intersection that includes
both a $k$-way merge, for non-constant $k$, and 
a crossing of two-way traffic is NP-complete,
via a reduction from the partition problem,
which is known to be NP-complete 
(e.g., see Garey and Johnson~\cite{Garey:1990:CIG:574848}).


\section{Definitions}
An intersection may be modeled as a collection of incoming and outgoing traffic lanes, together with constraints on which pairs of incoming and outgoing lanes can be used for simultaneous traffic flows without interference with each other. Each platoon can be specified by the incoming and outgoing lanes it follows, together with the times that the start and end of the platoon would reach the intersection if no delays are imposed; by analogy to job shop scheduling, we call the time at which the start of the platoon would reach the intersection the \emph{release time} of the platoon. The \emph{length} of a platoon is the difference between its start and end times. We require that the platoons initially occupy disjoint positions on each of their incoming lanes (that is, on each lane, the start and end times of each platoon form disjoint ranges of time) and that they remain disjoint throughout any valid schedule of traffic: two platoons on the same lane cannot exchange positions. A platoon cannot be subdivided into smaller units of traffic.

A schedule for such a problem can be described by specifying the time that each platoon begins crossing the intersection, which we call the \emph{crossing time} of the platoon. The time that it finishes crossing is the crossing time plus the length.
A schedule is \emph{valid} if it meets the following conditions:
\begin{itemize}
\item Every platoon's crossing time is on or after its release time. (Platoons can't break the speed limit to reach the intersection more quickly.)
\item For every two platoons on the same incoming lane, the crossing time of the second platoon is on or after the release time plus length of the first platoon. (Platoons in the same lane can't pass each other.)
\item If two platoons are on incompatible pairs of incoming and outgoing lanes, the open intervals between their crossing times and crossing times plus lengths are disjoint. (Cross traffic should not collide.)
\end{itemize}

The \emph{delay} of any platoon, in a valid schedule, is the difference between its crossing time and its release time (equivalently, the difference between the time that the end of the platoon finishes crossing in the schedule and the time that the end would finish crossing if there were no delays). The delay of a valid schedule is the maximum of the delays of the platoons. Our goal is to find a valid schedule with minimum delay.
(See Figure~\ref{fig:defs}.)

\begin{figure}[hb!]
\centering
\begin{tabular}{ccc}
        \includegraphics[width=.3\textwidth]{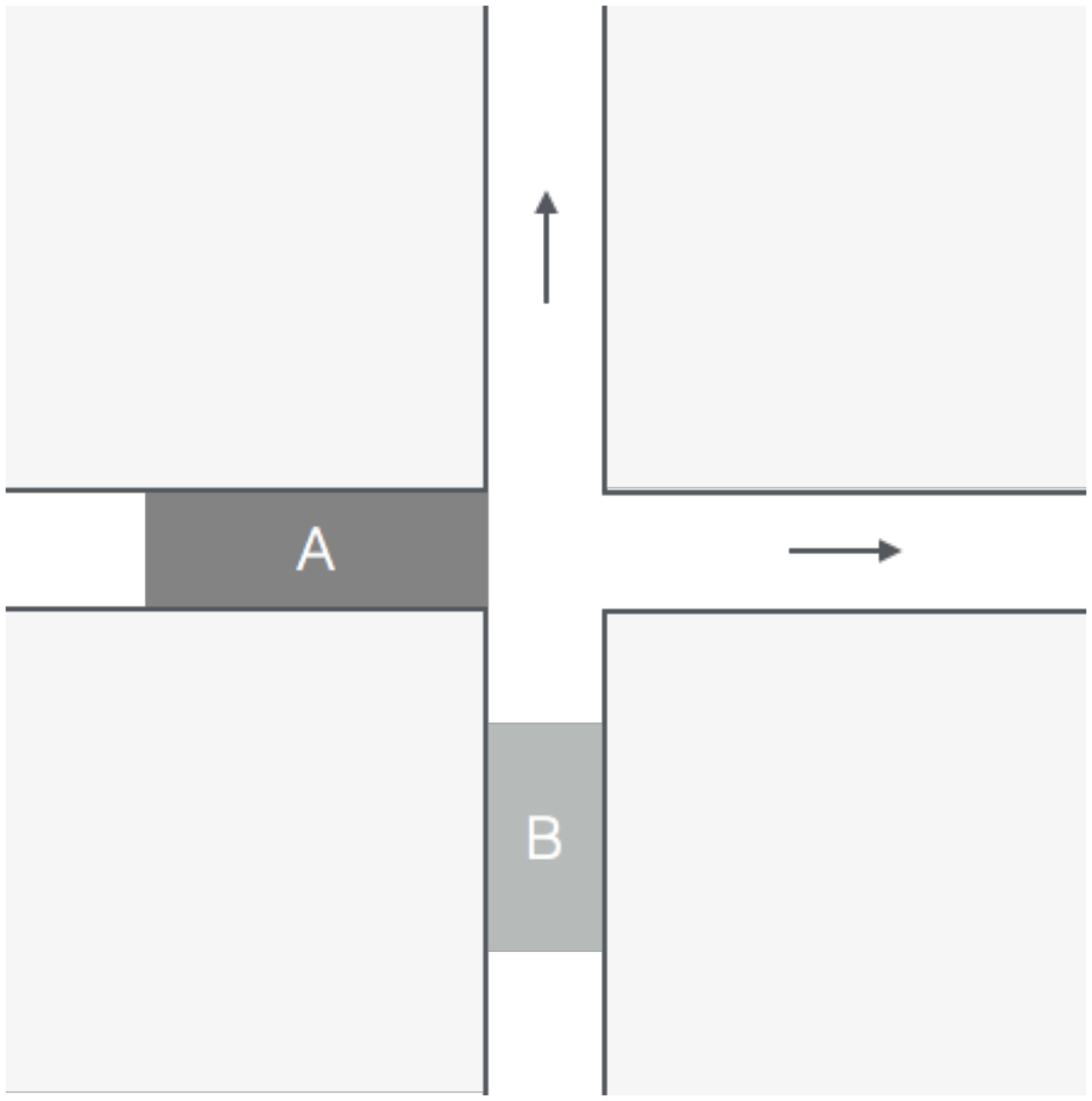} 
&
        \includegraphics[width=.3\textwidth]{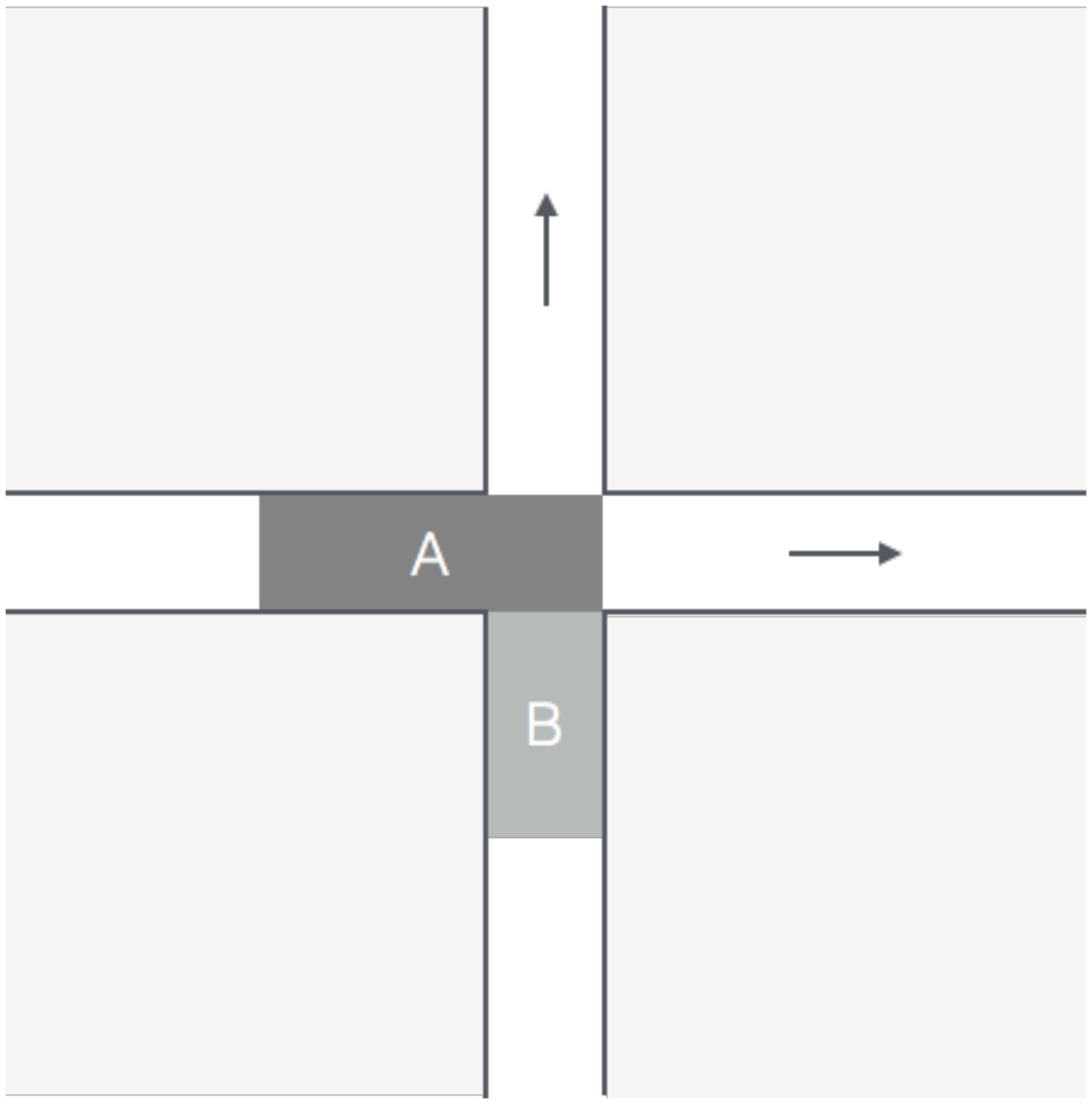}
&
        \includegraphics[width=.3\textwidth]{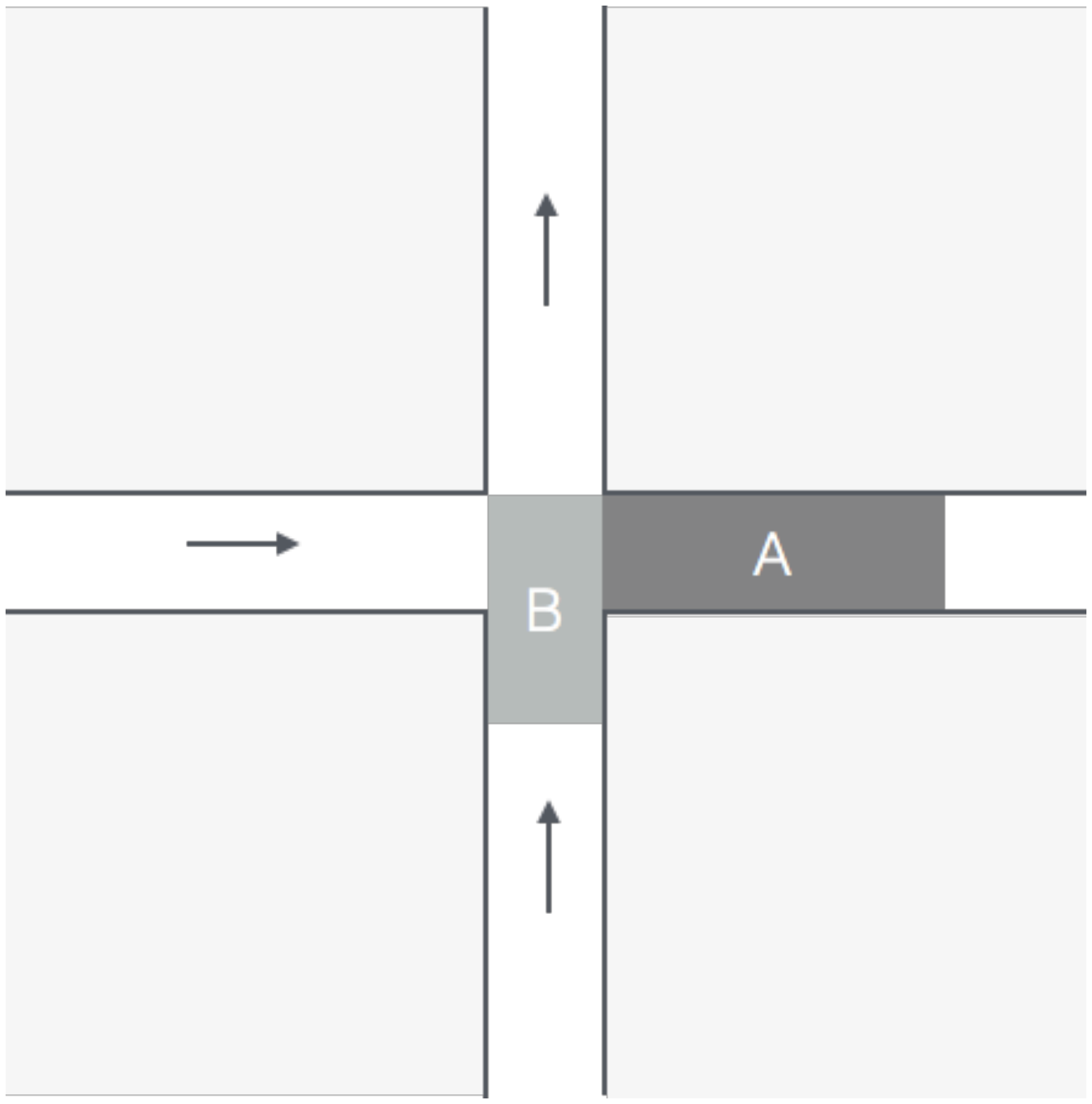}
\\[-0.4in]
$t=0$
&
$t=1$
&
$t=3$
\end{tabular}
\caption{At $t=0$ platoon $A$, of length 3, reaches the intersection and begins to cross it. The release time of $A$ is 0 and it's delay is also 0. Later at $t=1$ platoon $B$ arrives at the intersection; its release time is 1 but it cannot cross immediately because $A$ is in the intersection. Finally at $t=3$ $B$ begins to cross, with delay $2$. The delay of the overall schedule of these platoons is the maximum of the delays of the two platoons, 2. }
\label{fig:defs}
\end{figure}

The main parameter in our analysis will be the \emph{size} of a scheduling problem, which we define to be the number of platoons and which we will usually denote by the variable~$n$.
In some cases the analysis of our algorithms will also depend on the numerical resolution of the input. If all release times are integers, then there necessarily exists an optimal schedule in which the crossing times are also integers. In this case we define the \emph{length} of a schedule to be the maximum release time plus the sum of the lengths of all platoons. This number, which we denote by~$L$, provides a na\"ive bound on the maximum time required by a valid schedule that does not introduce gratuitous delays.


\section{Polynomial-time Algorithms}

In this section, we provide algorithms for platoon scheduling,
whose time bounds are either \emph{strongly polynomial} (i.e., with a runtime that depends polynomially on the number of platoons, but not at all on the timing of the platoons)
or \emph{polynomial} (depending polynomially on the number of platoons and on the number of bits of precision needed to specify their timing).
In contrast, algorithms whose running time includes terms proportional to the total number of time units of the schedule are not polynomial.

\subsection{Parametric search}
We will provide different algorithms for different intersection models, but they will all be based on an algorithmic metaprinciple, the \emph{parametric search} technique of Megiddo~\cite{Meg-JACM-83}, which allows us to convert decision algorithms (which answer whether or not there is a schedule with a given delay) into optimization algorithms (which find a schedule with minimum delay).

More precisely, we define a decision algorithm for a platoon scheduling problem to be an algorithm that takes as input a scheduling task and a parameter $d$, and tests whether there exists a valid schedule whose delay is at most $d$. We require that the only use the algorithm makes of its parameter $d$ is to perform a comparison, $d\ge c$, of $d$ against another number, $c$, calculated from the other input values (not including $d$). Intuitively, the decision algorithm is allowed to test questions like ``if this platoon were to cross now, would it cause other platoons' delays to exceed the given delay parameter?'' We will use $D(n)$ to denote the running time for such an algorithm.
However, as well as performing this algorithm directly, with a numeric parameter $d$, we will also
simulate the algorithm on a parameter that is not given to it explicitly, by performing some alternative computation to replace the comparisons with $d$. As a simple example, we have the following simulation.

\begin{lemma}
\label{lem:modified-decision}
If there is a decision algorithm (as described above) that either finds a valid schedule with delay at most $d$ or determines that no such schedule exists, in time $D(n)$, then there is also an algorithm that tests for a given $d$ whether there is a valid schedule with delay strictly less than $d$, in time $O(D(n))$.
\end{lemma}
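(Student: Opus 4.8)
The plan is to reduce the strict-inequality test to a single invocation of the given decision algorithm by exploiting the special structure of how the parameter $d$ is allowed to be used. By hypothesis, the only way the decision algorithm interacts with $d$ is through comparisons of the form $d \ge c$, where each $c$ is a number computed from the remaining input. So the entire behavior of the algorithm — the schedule it constructs or its determination that none exists — is a function of the \emph{sign pattern} of the quantities $d - c$ over all comparisons it performs. To test for a schedule with delay \emph{strictly} less than $d$, I want to run the algorithm on a value that behaves like ``$d$ minus an infinitesimal'': a value that is less than $d$ but greater than every comparison threshold $c$ that is itself strictly less than $d$.

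The key step is to represent this infinitesimally-perturbed parameter symbolically rather than numerically. I would run the decision algorithm on the formal parameter $d - \epsilon$, where $\epsilon$ is a symbolic positive infinitesimal, and evaluate each comparison $(d-\epsilon) \ge c$ according to the rule: the comparison is true if $d > c$, false if $d < c$, and \emph{false} if $d = c$ (since subtracting a positive infinitesimal from $d$ makes it strictly smaller than $c = d$). This is just a lexicographic comparison first on $d$ versus $c$ and, in case of a tie, resolving in favor of ``$d - \epsilon < c$.'' Each such comparison is resolved by at most two ordinary numeric comparisons, so the simulation runs in time $O(D(n))$ as claimed. The resulting execution is exactly the execution of the decision algorithm on a real delay parameter chosen to lie strictly below $d$ but above all thresholds $c < d$, which is precisely the semantics of ``delay strictly less than $d$.''

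I would then argue correctness by observing that a valid schedule with delay strictly less than $d$ exists if and only if one exists with delay at most $d - \epsilon$ for this symbolic $\epsilon$: any achievable delay value is one of the finitely many thresholds $c$ the algorithm can compare against (or can be taken to be such without loss, since the feasible delays form a structured set determined by release times and lengths), so ``$< d$'' and ``$\le d - \epsilon$'' pick out the same feasible schedules. Hence the simulated run returns a valid schedule precisely when a schedule of delay strictly less than $d$ exists, and otherwise reports infeasibility.

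The main obstacle I anticipate is justifying that the symbolic infinitesimal faithfully captures the strict inequality — specifically, confirming that the feasible delay values form a discrete enough set that ``strictly less than $d$'' coincides with ``at most the largest feasible value below $d$,'' which the infinitesimal $d - \epsilon$ correctly selects. In the continuous-time setting this requires care, but it follows from the restriction that the algorithm only compares $d$ against finitely many computed thresholds $c$; the behavior of the algorithm is constant on each open interval between consecutive thresholds, so perturbing $d$ downward by an infinitesimal lands us at the correct adjacent behavior. Making this argument precise about the admissible form of the comparisons is where the real content lies; the running-time bound and the mechanical substitution are routine.
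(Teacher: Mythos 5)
Your proposal is correct and matches the paper's proof: the paper likewise simulates the decision algorithm on a symbolic parameter $d-\epsilon$ for an infinitesimally small $\epsilon>0$, replacing each comparison $d\ge c$ by $d>c$. Your additional discussion of why the perturbation faithfully captures the strict inequality is a reasonable elaboration of the same idea.
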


\begin{proof}
We simulate the decision algorithm on the parameter $d-\epsilon$, for an unknown number $\epsilon>0$ that is smaller than the difference in delays between $d$ and the best valid schedule.
To do so, every time the simulated algorithm needs to test whether $d\ge c$, for some computed number $c$, we substitute the result of the comparison $d>c$.
\end{proof}

Using this method of simulation, we can transform a decision algorithm into an optimization algorithm, as follows.

\begin{lemma}
\label{lem:parametric-search}
Let $n$ denote the size and $L$ denote the length of a platoon scheduling problem.
Suppose that there exists a decision algorithm (as described above) that takes time $D(n)$ to test whether there is a schedule whose delay is at most a given parameter $d$. 
Then it is possible to compute a minimum-delay schedule in time $O(\min(D^2(n),D(n)\log L))$.
\end{lemma}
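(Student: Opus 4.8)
The plan is to prove the two bounds in the minimum separately, after one common observation: feasibility is monotone in the parameter, since any schedule with delay at most $d$ also certifies delay at most every $d'\ge d$. Thus the feasible parameters form an up-closed ray and the optimum $d^*$ is its infimum --- the threshold where the decision algorithm's verdict flips from infeasible to feasible --- so both bounds reduce to locating this threshold efficiently.

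The $O(D(n)\log L)$ bound comes from a straightforward binary search. When the release times are integers there is an optimal schedule with integer crossing times, so $d^*$ is a nonnegative integer, and $d^*\le L$ because the schedule that sends platoons across one after another (respecting release times) finishes by time $L$ and hence has delay at most $L$. Binary searching for the threshold over $\{0,1,\dots,L\}$ uses $O(\log L)$ probes, each a single invocation of the decision algorithm at cost $D(n)$; the last feasible probe also returns the schedule realizing $d^*$.

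For the strongly polynomial $O(D^2(n))$ bound I would invoke Megiddo's parametric search, simulating the decision algorithm on the unknown optimum $d^*$. I maintain an interval $(\ell,h]$ that contains $d^*$, with the invariant that $\ell$ is infeasible and $h$ is feasible. By hypothesis the algorithm touches its parameter only through comparisons $d\ge c$ whose right-hand side $c$ is computed independently of $d$; when the simulation reaches one, I resolve it as follows. If $c\notin(\ell,h)$ the interval endpoints already force the outcome at no cost. Otherwise I run the genuine decision algorithm on $c$, together with the strict variant from Lemma~\ref{lem:modified-decision}, to decide whether $d^*<c$, $d^*=c$, or $d^*>c$, and I shrink $(\ell,h]$ accordingly (halting with the answer should equality occur). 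Because every comparison is resolved consistently with the true $d^*$, the simulation follows precisely the path the decision algorithm takes on input $d^*$, and therefore it constructs a valid schedule of delay exactly $d^*$; if equality is never detected, the interval collapses to $h=d^*$ and one final call on $h$ recovers the schedule.

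It remains to bound the cost. The simulated run performs at most $D(n)$ comparisons, since each consumes a step of its $D(n)$-time computation, and each interior comparison triggers only $O(1)$ calls to the decision algorithm at $O(D(n))$ apiece; with the $O(D(n))$ of remaining work this totals $O(D^2(n))$, and combining with the binary-search bound gives $O(\min(D^2(n),D(n)\log L))$. I expect the delicate point to be the comparison bookkeeping: one must verify that the non-strict and strict feasibility tests at a critical value $c$ genuinely determine the three-way sign of $d^*-c$, and that detecting $d^*=c$ (or collapsing the interval to a point) leaves us holding an optimal schedule rather than merely its value.
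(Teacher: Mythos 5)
Your proof is correct and follows essentially the same route as the paper's: simulate the decision algorithm on the unknown optimum $d^*$, maintain an interval bracketing $d^*$, and resolve each comparison against a critical value $c$ by invoking both the decision algorithm and the strict variant of Lemma~\ref{lem:modified-decision}, which together determine the sign of $d^*-c$ and detect optimality when $d^*=c$. The only difference is organizational: the paper obtains both bounds from a single simulation by performing up to two probes per comparison (first the integer nearest the interval's midpoint, then $c$ itself), whereas you split them into a standalone binary search over $\{0,\dots,L\}$ and a pure parametric search, then take the better of the two --- which yields the same $O(\min(D^2(n),D(n)\log L))$ bound.
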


\begin{proof}
We simulate the decision algorithm, as if it were given the (unknown) delay $d^*$ of a minimum-delay schedule. To do so, we maintain an open interval $(\ell,r)$ known to contain $d^*$; initially $(\ell,r)=(-\infty,\infty)$. Whenever the simulated decision algorithm performs a comparison of $d^*$ with some comparison value $c$, we simulate the comparison by checking whether $c$ belongs to the interval $(\ell,r)$, and (if it does) refining this interval to exclude $t$. Once $c$ lies outside $(\ell,r)$, we can determine the relative orders of $d^*$ and $c$ by comparing $c$ to $\ell$ and $r$.

To refine the interval $(\ell,r)$ to exclude $t$, we choose a test
value $t$ within the interval, and call both the decision algorithm
itself and the modified decision of Lemma~\ref{lem:modified-decision},
recursively with $t$ as their parameters. If the two algorithms
produce differing results, then $t$ is the optimal delay, and we
half the simulation and return $t$. If they determine that there
is a valid schedule with delay less than $t$, we set $r$ to $t$,
and the new interval to $(\ell,t)$.  And if they determine that
there is no schedule with delay $t$ or less, we set $\ell$ to $t$,
and the new interval to $(t,r)$.

It remains to specify how to choose the test value $t$ that we use to refine the interval. For each simulated comparison with a value $c$ within the interval, we perform at most two such tests. The first one selects $t$ to be the integer closest to the midpoint of the current interval $(\ell,r)$. If we refine the interval using that choice of $t$ and determine that $c$ still remains within the interval, then we perform a second refinement with $t=c$. The first choice of~$t$ ensures that the total number of refinement steps is $O(\log L)$, and the second choice of~$t$ ensures that, after these refinement steps, $c$ will be outside the remaining interval $(\ell,r)$.

The simulated algorithm behaves discontinuously at $d^*$ (it returns a valid schedule for larger values and a failure indication for smaller values). Because the only use it makes of its parameter is to perform comparisons, the only way it can be discontinuous at $d^*$ is to eventually perform a comparison in which the comparison value $c$ equals $d^*$. When it does so, the simulation will detect this equality and terminate the search with the optimal delay.
\end{proof}

\subsection{One-way crossings or Y merges}

The simplest example of our scheduling algorithm arises for two one-way roads that cross each other, with no platoons that turn from one road to the other. Each road has one incoming and one outgoing lane of traffic, the only pairs of incoming and outgoing lanes that are allowed to be used are the ones that stay on the same road, and traffic on one road cannot cross the intersection simultaneously with traffic on the opposite road.

Although it describes a different configuration of streets, this model is mathematically equivalent to one with two incoming lanes and one outgoing lane, forming a Y where the two incoming lanes merge. The pairs of lanes that are allowed are formed by one of the two incoming lanes together with the single outgoing lane. As before, it is not allowed for platoons from both incoming lanes to cross the merge point simultaneously. Almost the same mathematical model also applies to a T-junction of a minor two-way street onto a more major boulevard, restricted so that left turns from or to the boulevard are disallowed: right-turning traffic from the boulevard to the street, and through traffic on the far side of the boulevard from the street, can both flow freely, as they cannot interfere with any other platoons, and the remaining traffic has the same pattern as a Y merge.
(See Figure~\ref{fig:intersections}.)

\begin{figure}[hbt]
\centering
\begin{tabular}{ccc}
        \includegraphics[width=.3\textwidth]{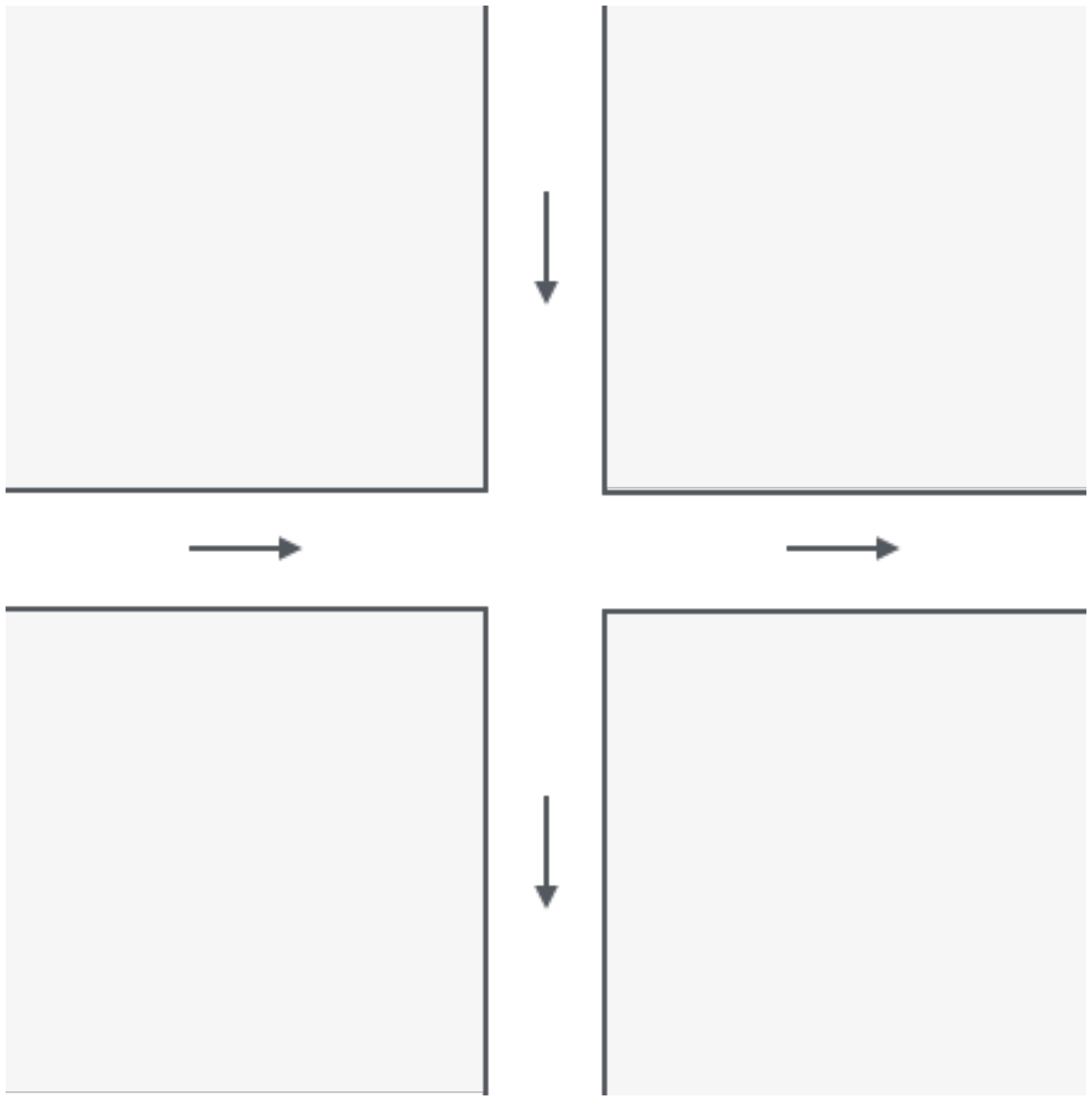}
&
        \includegraphics[width=.3\textwidth]{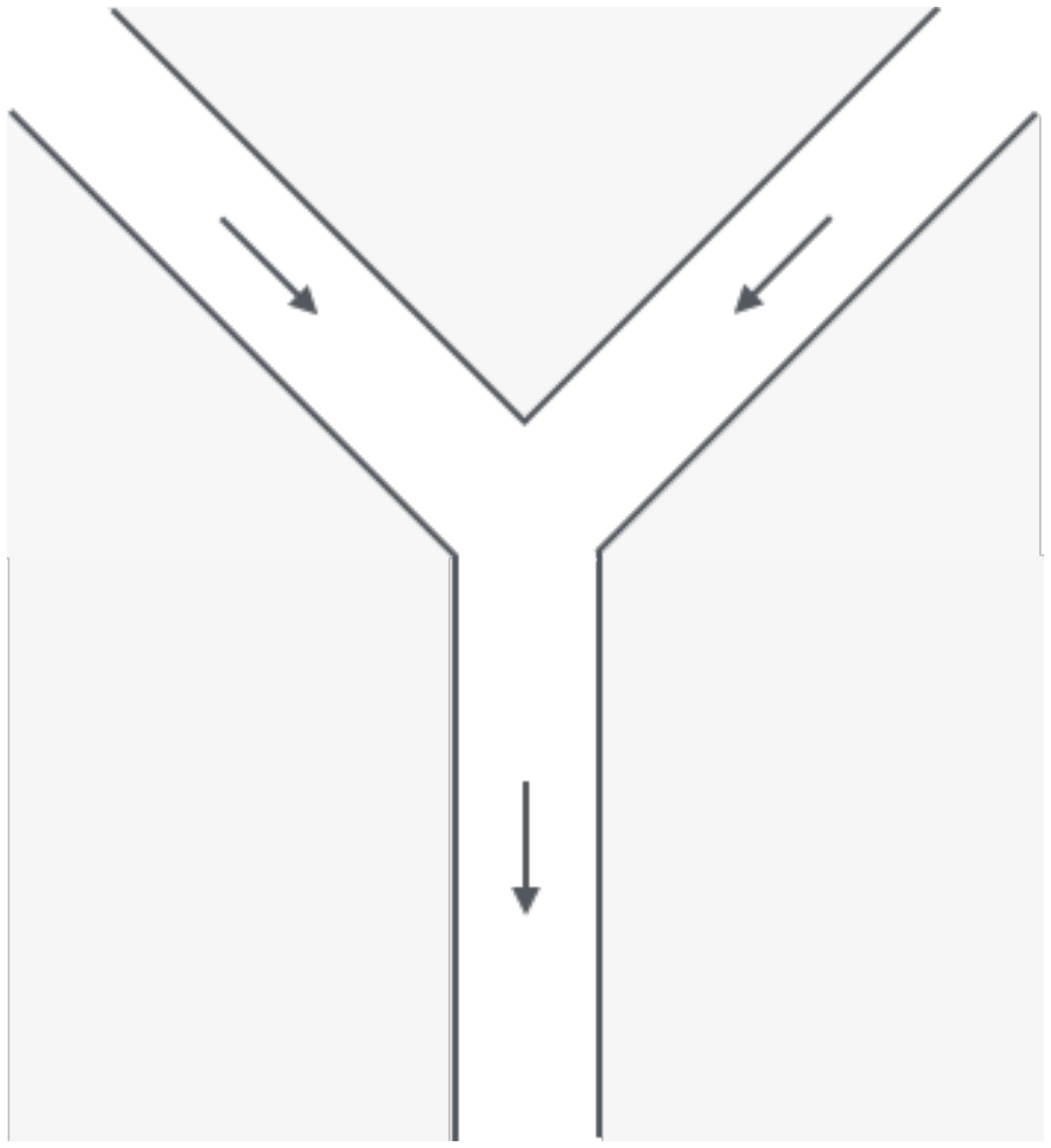}
&
        \includegraphics[width=.3\textwidth]{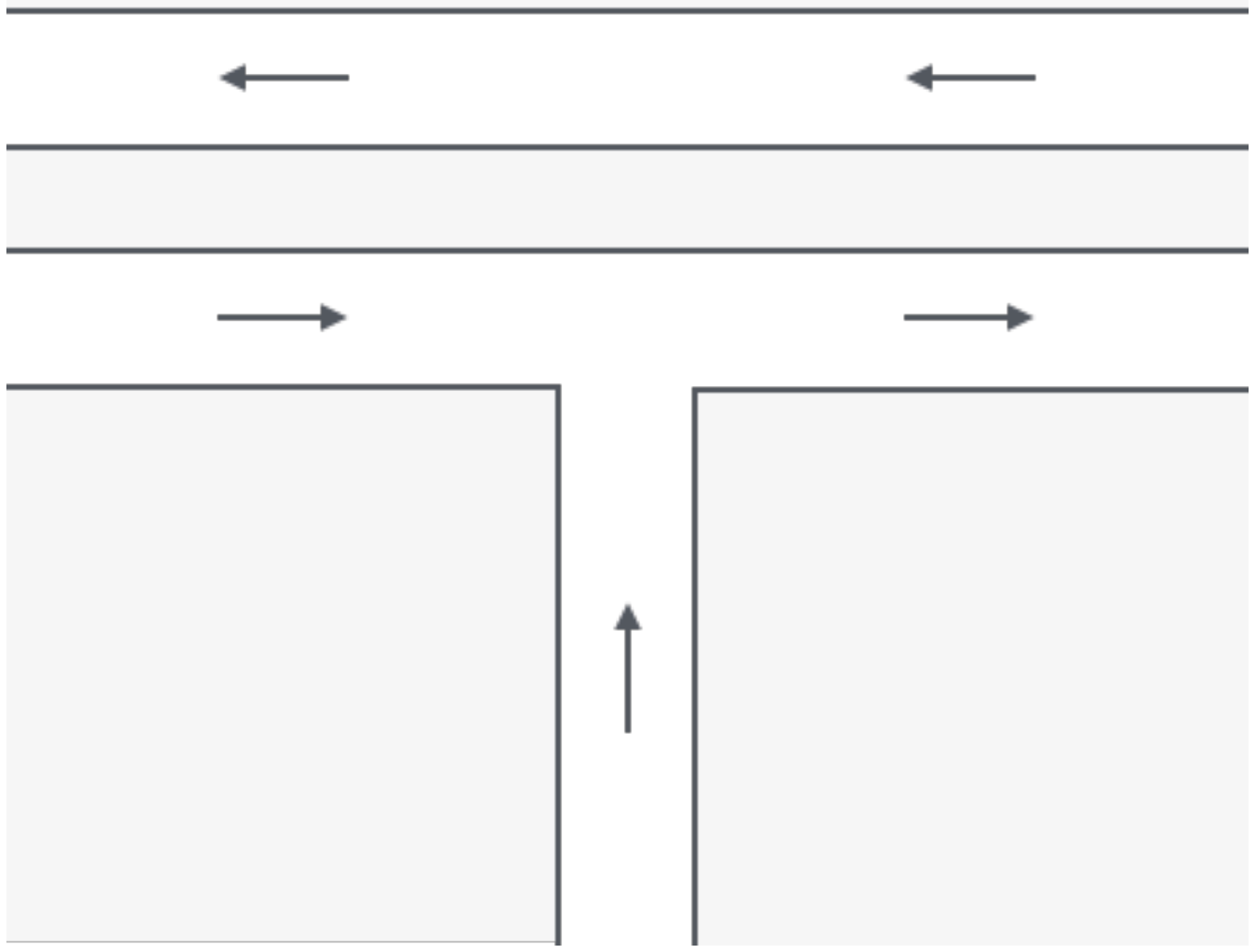}
\\[-.4in]
(a) & (b) & (c) 
\end{tabular}
\caption{Three intersections that are mathematically equivalent in our model:
(a) a crossing, (b) a Y merge, and (c) a T-junction.}
\label{fig:intersections}
\end{figure}

If we are trying to find a minimum-delay schedule, it is not always safe to allow a platoon to cross, even when there is no platoon on the other incoming lane that can cross at the same time. For example, consider the situation where one incoming road has a long platoon, ready to cross, while the other incoming road has a short platoon that is not yet ready but will reach the crossing soon. If we allow the long platoon to cross, the short platoon may be delayed for an excessive amount of time while waiting for the long platoon to finish crossing. On the other hand, if we delay the long platoon while we wait for the short platoon to arrive and cross, the delay for these two platoons may be better, but the time until both platoons have cleared the crossing will be longer, potentially causing greater delays for later platoons. 
(See Figure~\ref{fig:delay}.)
Nevertheless, if we know the maximum delay that we are willing to tolerate, we can apply a simple greedy algorithm that will either find a valid schedule with that delay or determine that no such schedule is possible.

\begin{figure}[hbt]
\centering
\label{fig:LongAndShort}
\includegraphics[width=0.4\textwidth]{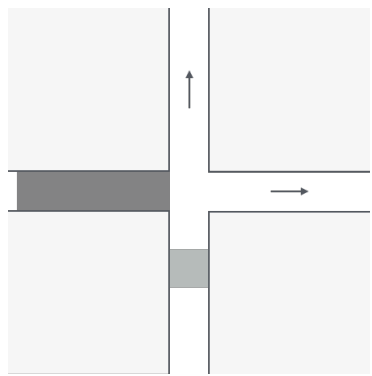}
\vspace*{-.4in}
\caption{An example where crossing if possible is not always the best choice. Delaying the crossing of the long platoon, until the short platoon has cleared the intersection, reduces the maximum delay of the schedule.}
\label{fig:delay}
\end{figure}

\begin{lemma}
\label{lem:greedy-Y}
Let $d$ be a delay parameter for a platoon scheduling problem with an intersection configured as a one-way crossing or Y merge and with $n$ platoons. Then an algorithm given $d$ as a parameter can either find a valid schedule with delay at most $d$, or determine that no such schedule is possible, in time $O(n)$.
\end{lemma}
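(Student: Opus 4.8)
The plan is to recast the Y merge (equivalently, the one-way crossing) as the scheduling of a single exclusive resource---the merge point---fed by two first-in-first-out lanes. Because traffic from the two incoming lanes may never be in the intersection simultaneously, and platoons on a common lane may neither pass nor overlap, only one platoon occupies the merge point at a time; hence a schedule is an assignment of pairwise-disjoint crossing intervals $[c_i,c_i+\ell_i)$, one per platoon, in which the crossing order on each lane agrees with the release order. The delay bound $d$ becomes a per-platoon deadline on the crossing time: platoon $i$ must cross with $c_i\in[r_i,\,r_i+d]$. I will also invoke the standing assumption that platoons start at disjoint positions on each lane, i.e.\ $r_{j+1}\ge r_j+\ell_j$ for consecutive same-lane platoons; this is exactly what keeps the greedy rule purely local.

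Next I would state the greedy itself. Maintain the time $T$ at which the intersection next becomes free, initialized to $-\infty$ (so the first head crosses at its release time), and repeatedly choose the next platoon to cross from among the at most two current lane heads. If only one lane is nonempty, cross its head $h$ at $\max(T,r_h)$. Otherwise let $a,b$ be the two heads with $r_a\le r_b$, and let $e_a=\max(T,r_a)$ be the earliest time $a$ could cross. The rule is: cross $a$ next \emph{unless} doing so would delay $b$ past its deadline---that is, unless $e_a+\ell_a>r_b+d$---in which case cross $b$ next instead. After each choice, advance $T$ to the just-scheduled crossing time plus that platoon's length and remove it from its lane, checking each time that the platoon meets its own deadline, $\max(T,r_h)\le r_h+d$; if any check fails, report that no schedule with delay $d$ exists. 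Since each platoon is removed once and each step is $O(1)$, the running time is $O(n)$.

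Correctness is an exchange argument comparing the greedy schedule against an arbitrary feasible one at their first point of divergence; because lane orders are fixed, the next platoon to cross is always one of the two heads, so a divergence means the two schedules cross different heads next. One direction is immediate: when the greedy is forced to cross $b$ first because $e_a+\ell_a>r_b+d$, any schedule crossing $a$ before $b$ would give $b$ a crossing time of at least $e_a+\ell_a>r_b+d$, so $b$ must precede $a$ in every feasible schedule and the greedy loses nothing. The hard part will be the other branch: when the greedy crosses the earlier-released head $a$ first (because $e_a+\ell_a\le r_b+d$), I must exhibit a feasible schedule that also crosses $a$ first, obtained from a given feasible one by pulling $a$ to the front and sliding the displaced platoons later without violating any deadline.

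The crux of that branch is bounding the knock-on effect on the successors of $b$ on lane~B, and this is where disjoint starts do the work: since $r_{j+1}\ge r_j+\ell_j$ along the lane, once $a$ has crossed, the delays it induces down the lane are non-increasing---each later platoon's release has advanced by at least its predecessor's length, so the intersection ``catches up''---and hence the head $b$ is the worst-delayed platoon of that lane, with delay exactly the quantity the local test $e_a+\ell_a\le r_b+d$ keeps at most $d$. Establishing this slack-regeneration invariant, and verifying that it survives arbitrary interleaving with lane~A in a fully general feasible schedule, is the main obstacle. I expect the invariant genuinely to require disjoint starts: a long A-platoon trailed by two B-platoons whose starting positions overlap already defeats the purely local rule, which is why the assumption is built into the model. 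Once the invariant is in hand the exchange goes through, upgrading ``greedy succeeds'' to ``greedy succeeds whenever some feasible schedule exists,'' which together with the $O(n)$ bound proves the lemma.
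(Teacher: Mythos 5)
Your proposal is correct and follows essentially the same route as the paper: the identical greedy rule (let the earlier-released head cross unless that would push the other head past $r_b+d$), an exchange argument at the first divergence from an arbitrary feasible schedule, and the observation that disjoint starting positions make delays non-increasing along a lane so that the displaced head is the worst-affected platoon. The step you flag as the main obstacle is exactly the step the paper handles, and the interleaving worry dissolves because every platoon scheduled between $p_j$ and $p_i$ in the feasible schedule must lie on $p_j$'s lane (since $p_i$ is its lane's head), so the displaced block is a contiguous single-lane run whose completion time can only improve when the earlier-released $p_i$ is moved to its front.
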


\begin{proof}
After each platoon finishes crossing the intersection, the algorithm selects between the next two platoons, $p_i$ and $p_j$, to arrive at the intersection (one on each of the two incoming lanes), as follows. Let $p_i$ be the first of these platoons to arrive at the intersection (choosing arbitrarily when they both arrive at the same time). If allowing $p_i$ to cross as soon as it can would delay $p_j$ by at most $d$ time units, then $p_i$ is allowed to cross next. Otherwise, $p_j$ crosses next. If the resulting schedule has delay at most $d$, it is returned; otherwise, the algorithm reports that no such schedule is possible.

Clearly, this simple algorithm takes time $O(n)$ and, when it returns a valid schedule, the schedule has delay at most $d$. It remains to show that, if a valid schedule $S$ with delay at most $d$ exists, then our algorithm will succeed in finding a schedule (possibly different from $S$) that also has delay at most $d$. We may assume without loss of generality that (like our greedy algorithm) $S$ schedules each platoon as early as possible given the ordering of platoons across the crossing that it selects.
We may also assume without loss of generality that the hypothetical valid schedule $S$ follows the same sequence of scheduling choices as our greedy algorithm for as many steps as possible. We will prove by contradiction that, with this assumption, $S$ must actually equal our greedy schedule.

For, if not, $S$ and the greedy schedule diverge at some point $t$ in time, when two platoons $p_i$ and $p_j$ are arriving on the two incoming lanes, $S$ chooses one of them as the next to cross, and our greedy algorithm selects the other one as the next to cross. Let $p_i$ be the first of these platoons that our greedy algorithm considers as a candidate for the next one to cross. Then our algorithm will only choose $p_j$ if it is forced to (because choosing $p_i$ would cause an excessive delay to $p_j$), and in this case $S$ cannot choose $p_i$ for the same reason. So the only way for our algorithm and $S$ to differ would be for our algorithm to allow $p_i$ to cross and for $S$ to instead let $p_j$ be the next platoon to cross. But in this case let $S'$ be a schedule modified from $S$ by allowing $p_i$ to cross next, and otherwise keeping all platoons in the same order given by $S$. The platoons that are disadvantaged by this change are $p_j$ and the other platoons on the same incoming lane that immediately follow $p_j$ in schedule $S$, but their maximum delay in $S'$ is at most $d$. For all remaining platoons, this change in schedule does not cause any additional delays, because the total time until $p_i$, $p_j$, and the other platoons following $p_j$ in the same lane have all crossed can only decrease because of the earlier release time of $p_i$ relative to $p_j$. So, like $S$, schedule $S'$ also has maximum delay at most $d$, but it agrees with our greedy algorithm for one more step. This contradicts the choice of $S$ as the schedule that agrees with the greedy algorithm for as many steps as possible, and the contradiction can only be resolved by $S$ (a valid schedule with delay at most $d$) equalling the greedy schedule.
\end{proof}

\begin{theorem}
A minimum-delay schedule for a platoon scheduling problem with an intersection configured as a one-way crossing or Y merge and with $n$ platoons can be found in time $O(\min(n^2,n\log L))$.
\end{theorem}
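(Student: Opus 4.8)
The plan is to obtain this theorem as an immediate composition of the two lemmas already established. Lemma~\ref{lem:greedy-Y} supplies a decision algorithm for the one-way crossing or Y merge configuration that runs in time $O(n)$, and Lemma~\ref{lem:parametric-search} converts any decision algorithm running in time $D(n)$ into an optimization algorithm (one that finds a minimum-delay schedule) running in time $O(\min(D^2(n),D(n)\log L))$. Substituting $D(n)=O(n)$ then yields the claimed bound $O(\min(n^2,n\log L))$, and the theorem follows.

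The one genuine verification step is confirming that the greedy algorithm of Lemma~\ref{lem:greedy-Y} qualifies as a \emph{decision algorithm} in the restricted sense that the parametric-search framework of Lemma~\ref{lem:parametric-search} demands: its only use of the delay parameter $d$ must be to perform comparisons of the form $d\ge c$, where each $c$ is a value computed from the rest of the input (not from $d$ itself). I would inspect the two points at which the greedy algorithm touches $d$. The first is the test of whether letting $p_i$ cross as early as possible would delay $p_j$ by at most $d$ time units; this is precisely a comparison of $d$ against the delay that $p_j$ would incur, a quantity computed from the arrival and crossing times of the schedule built so far. The second is the final check of whether the resulting schedule has delay at most $d$, which is a comparison of $d$ against the computed maximum platoon delay. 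Both are of the required form.

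I do not expect a serious obstacle here, since the argument is a direct plug-in; the only subtlety worth pinning down is that the branch decisions and comparison values $c$ produced by the greedy algorithm depend only on the previously resolved comparisons and on the input, never on the numeric value of $d$ in any other way. In the greedy algorithm, the sequence of which platoons are considered ``next'' and their crossing times are fully determined by the partial schedule constructed so far, which is in turn determined by the outcomes of the earlier comparisons. This is exactly the structure the simulation in Lemma~\ref{lem:parametric-search} relies on when it resolves each comparison by narrowing the interval $(\ell,r)$ known to contain the optimal delay $d^*$, so the framework applies without modification and the composition goes through.
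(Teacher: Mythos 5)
Your proposal is correct and matches the paper's own proof, which likewise obtains the theorem by plugging the $O(n)$ greedy decision algorithm of Lemma~\ref{lem:greedy-Y} into the parametric search framework of Lemma~\ref{lem:parametric-search}. Your extra check that the greedy algorithm only accesses $d$ through comparisons of the form $d\ge c$ is a worthwhile clarification, but it does not change the argument.
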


\begin{proof}
We apply the parametric search technique of Lemma~\ref{lem:parametric-search},
using the greedy algorithm of Lemma~\ref{lem:greedy-Y} 
as the decision algorithm.
\end{proof}

\subsection{Multiway merges}

It is not clear how to extend our greedy scheduling decision algorithm even to $3$-way merges. For instance, consider a situation where a long platoon arrives on one incoming lane, somewhat earlier than two shorter platoons would arrive on two other lanes. Even if the long platoon would not necessarily cause excessive delays to either short platoon by itself, allowing the long platoon to cross might lead to a situation where neither of the two short platoons can cross, because it would excessively delay the other one. Untangling these indirect effects seems beyond the scope of the local decisions made by the greedy algorithm.

Nevertheless, we can find an optimal schedule for a $k$-way merge in polynomial time using a somewhat more complicated dynamic programming algorithm. We model the intersection as having $k$ incoming lanes and one outgoing lane. The outgoing lane can be paired with any incoming lane, but only one such pair of an incoming and outgoing lane can use the intersection at any given time.
(See Figure~\ref{fig:k-way}.)

\begin{figure}[hbt]
\centering

\label{fig:5WayIntersection}
\vspace*{-.1in}
\includegraphics[width=0.6\textwidth, trim= 0in 2.3in 0in 2.3in, clip]{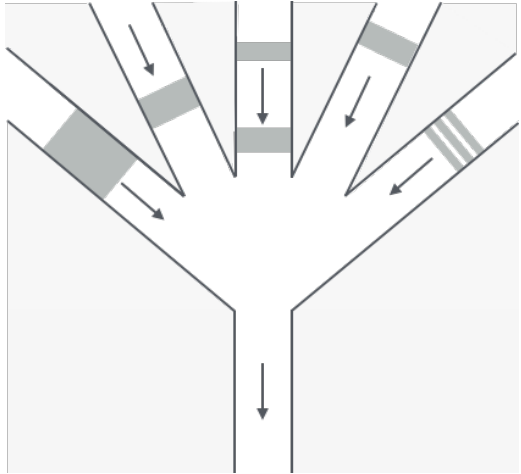}
\vspace*{-.1in}
\caption{An example 5-way merge intersection.}
\label{fig:k-way}
\end{figure}

We define a \emph{state} of an intersection to be a situation in which some platoons have completely crossed the intersection and some others still remain to cross, without there being a platoon that is only partly across. For a scheduling task with $n$ platoons and $k$ incoming lanes, there are $O(n^k)$ possible states that could occur. Any actual schedule for this task can be represented as a sequence of $n+1$ states, starting from a state in which no platoons have crossed and ending at a state in which all platoons have crossed. For any such sequence of states, it is safe to allow each platoon to cross as early as possible, consistent with the ordering of the platoons determined by the sequence of states.

\begin{lemma}
\label{lem:kway-decision}
For a $k$-way merge platoon scheduling problem as described above (with $k$ constant), and a given parameter $d$, it is possible to determine in time $O(n^k)$ whether a valid schedule with delay at most $d$ exists.
\end{lemma}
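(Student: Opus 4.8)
The plan is to fill in a dynamic program over the $O(n^k)$ states defined above. First I would fix, for each incoming lane $i$, the forced order in which its platoons cross: since platoons cannot pass one another, this is simply their arrival order, recorded as a list $p_{i,1}, p_{i,2}, \ldots$ with release times $r_{i,j}$ and lengths $\ell_{i,j}$. A state is then a vector $s=(a_1,\ldots,a_k)$ with $0 \le a_i \le n_i$, recording that the first $a_i$ platoons of lane $i$ have finished crossing; there are $\prod_i (n_i+1) = O(n^k)$ of these for constant $k$. Because only one incoming/outgoing pair may occupy the merge at a time, any schedule serializes all platoons into a single total order that is a shuffle of the $k$ lane orders, i.e.\ a monotone lattice path from the all-zero state to the state $(n_1,\ldots,n_k)$. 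Along such a path each edge crosses one further platoon, and (as already noted) it is safe to cross that platoon as early as the schedule allows.

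For each state $s$ I would store the quantity $T(s)$, defined as the minimum, over all valid partial schedules that reach $s$ and in which every platoon crossed so far has delay at most $d$, of the time at which the intersection next becomes free (equivalently, the finish time of the last platoon crossed); set $T(s)=\infty$ if no such partial schedule exists and $T(\mathbf{0})=-\infty$. The recurrence considers each lane $i$ with $a_i<n_i$ and lets $p=p_{i,a_i+1}$ be the next platoon on that lane. Given that the intersection is free at time $T(s)$, the earliest $p$ can cross is $\max(T(s), r_{i,a_i+1})$, its delay is $\max(0,\, T(s)-r_{i,a_i+1})$, and the move is permitted only if this delay is at most $d$, i.e.\ $d \ge T(s)-r_{i,a_i+1}$. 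When permitted, it performs the update $T(s+e_i) \gets \min\bigl(T(s+e_i),\, \max(T(s),r_{i,a_i+1})+\ell_{i,a_i+1}\bigr)$. The parameter $d$ enters only through the single comparison $d \ge T(s)-r_{i,a_i+1}$, so this decision algorithm has exactly the comparison structure required by Lemmas~\ref{lem:modified-decision} and~\ref{lem:parametric-search}. A valid schedule of delay at most $d$ exists if and only if $T(n_1,\ldots,n_k)$ is finite.

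The step I expect to be the crux is justifying that it is correct to retain only the single value $T(s)$ per state, i.e.\ that an earlier free time is never worse than a later one. I would prove this dominance by an exchange argument: if two valid delay-$\le d$ partial schedules reach $s$ with free times $f_1 \le f_2$, then splicing the remaining crossing decisions of the second onto the prefix of the first yields a schedule in which every subsequent crossing time, computed by the same greedy rule, can only decrease or stay equal. Since crossing times only move earlier, no already-bounded delay grows and no later free time grows, so every delay check and every feasibility condition that held for the $f_2$-continuation continues to hold. Hence the minimum free time is a sufficient statistic for a state, the DP value $T(s)$ is well defined, and optimal substructure holds.

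Finally I would bound the running time. There are $O(n^k)$ states, each processed once with $k$ outgoing transitions of cost $O(1)$, and computing the forced lane orders takes $O(n\log n)$ (or $O(n)$ with presorted releases). For constant $k$ this is dominated by $O(n^k)$, giving the claimed bound and completing the proof that feasibility for a given $d$ can be decided in time $O(n^k)$.
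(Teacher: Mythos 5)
Your proposal is correct and follows essentially the same approach as the paper: a dynamic program over the $O(n^k)$ states recording how many platoons have crossed on each lane, storing per state the minimum completion time of any valid delay-$\le d$ partial schedule, with $O(1)$ work per state. The only difference is that you explicitly prove the dominance property (that the minimum free time is a sufficient statistic, via an exchange argument), which the paper leaves implicit; this is a welcome addition rather than a deviation.
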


\begin{proof}
We use dynamic programming to find, for each state $s$, the earliest time $t_d(s)$ that it is possible to reach state $s$ via a partial schedule in which the maximum delay of any platoon that crosses the intersection within the partial schedule is $d$ (or $+\infty$ if no such schedule exists). As a base case, for the state in which no platoons have crossed, $t_d(s)$ may be set to the earliest release time of any platoon.

For each state $s$, there are (at most) $k$ states $s_1,\dots s_k$ that could be the predecessor of $s$ in a valid sequence of states, obtained from $s$ by omitting the last platoon to cross on each of the $k$ incoming lanes (if $s$ includes a platoon that has already crossed on that lane). A potential schedule for $s$ may be obtained by choosing an incoming lane $i$, choosing a schedule for $s_i$ obtaining the earliest possible completion time $t_d(s_i)$, and then allowing the final platoon on lane $i$ to cross at the maximum of $t_d(s_i)$ and its release time. If this potential schedule does not delay the platoon on lane~$i$ by more than $d$, it is valid. The earliest completion time $t_d(s)$ may be computed by finding all valid schedules of this type and choosing one for which the final platoon finishes crossing as early as possible.

The overall algorithm loops through the states in a consistent ordering, chosen so that for each state $s$ the predecessor states $s_i$ will all already have been looped through. For each state $s$ in this loop, it uses the computation described above to compute $t_d(s)$. The time is constant for each state, and there are $O(n^k)$ states, so the total time is $O(n^k)$.

A valid schedule for the whole scheduling task exists if and only if the state $s$ representing the situation in which all platoons have crossed has a finite value of $t_d(s)$.
\end{proof}

\begin{theorem}
A minimum-delay schedule for a platoon scheduling problem with an intersection configured as a $k$-way merge and with $n$ platoons can be found in time $O(\min(n^{2k},n^k\log L))$.
\end{theorem}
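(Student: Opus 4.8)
The plan is to obtain the optimization algorithm directly from the decision algorithm by the parametric search composition of Lemma~\ref{lem:parametric-search}, exactly as was done for the one-way crossing in the previous subsection. Lemma~\ref{lem:kway-decision} supplies a decision algorithm that, given a delay bound $d$, tests whether a valid schedule of delay at most $d$ exists, in time $D(n)=O(n^k)$. Substituting this into Lemma~\ref{lem:parametric-search} immediately yields a minimum-delay schedule in time $O(\min(D^2(n),D(n)\log L))=O(\min(n^{2k},n^k\log L))$, which is the claimed bound. So at the level of the statement, the theorem follows by the same two-line argument used for the one-way crossing theorem.

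The one point that must be checked before Lemma~\ref{lem:parametric-search} can be invoked is that the dynamic program of Lemma~\ref{lem:kway-decision} meets the interface required of a decision algorithm: that its only use of the parameter $d$ is in comparisons $d\ge c$ whose right-hand sides $c$ do not themselves depend on $d$. The dynamic program touches $d$ only in the validity test for each candidate transition, when the last platoon $p$ on lane $i$ is allowed to cross at time $\max(t_d(s_i),r_p)$ and the program asks whether the resulting delay $\max(t_d(s_i),r_p)-r_p$ is at most $d$. I would verify by induction over the state ordering that, conditioned on the outcomes of the earlier validity comparisons, each table entry $t_d(s)$ is a quantity assembled from the release times and lengths by $\min$, $\max$, and addition, with no further reference to $d$; the role of $d$ is solely to gate which transitions are counted as valid. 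Consequently every comparison value $c=\max(t_d(s_i),r_p)-r_p$ is, given the decisions made so far, a number computed without $d$, which is precisely the condition that Lemma~\ref{lem:parametric-search} demands.

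I expect this verification to be the only real work: once it is in place, the running-time arithmetic is routine and the theorem drops out. The delicate part is arguing that $t_d(s)$ carries no hidden dependence on $d$ beyond the branch structure. The table entries do vary with $d$ (they can only decrease as $d$ grows, since a larger bound admits more partial schedules), so one must be careful to attribute that variation entirely to the sequence of resolved comparison outcomes rather than to the comparison values themselves. Establishing that attribution cleanly, by the inductive argument sketched above, is what licenses treating the dynamic program as a legitimate input to the parametric search machinery.
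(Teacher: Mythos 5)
Your proposal is correct and takes the same route as the paper, which proves the theorem in one line by plugging the decision algorithm of Lemma~\ref{lem:kway-decision} into the parametric search of Lemma~\ref{lem:parametric-search}. Your additional verification that the dynamic program uses $d$ only through comparisons against $d$-free values is a worthwhile detail the paper leaves implicit, but it does not change the argument.
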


\begin{proof}
We apply the parametric search technique of Lemma~\ref{lem:parametric-search},
using the dynamic programming algorithm of 
Lemma~\ref{lem:kway-decision} as the decision algorithm.
\end{proof}

\subsection{Two-way crossing}

Our most complicated single-intersection model has two roads with two-way traffic, crossing each other at a single intersection, with no left turns allowed. There are four incoming lanes of traffic paired with four outgoing lanes of traffic. Two pairs of lanes on the same road as each other do not interfere (platoons of traffic traveling on these pairs of lanes can simultaneously pass through the intersection without delays) but any traffic on one road interferes with all traffic on the other road.
(See Figure~\ref{fig:two-way}.)

\begin{figure}[hbt]
\centering
\includegraphics[width=0.45\textwidth, trim= 0in 2in 0in 2in, clip]{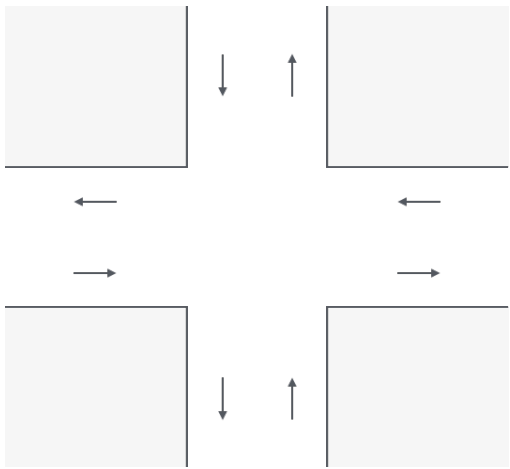}
\caption{A two-way crossing.}
\label{fig:two-way}
\end{figure}

As for the $k$-way merge, we define a state to be a situation in which some platoons have completely crossed the intersection and some others still remain to cross, without there being a platoon that is only partly across. There are $O(n^4)$ states, one for each way of selecting an initial subset of the platoons on each of the four incoming lanes. Unlike the $k$-way merge, however, a valid schedule for all the platoons does not necessarily have $n+1$ states, differing from each other by a single platoon. Instead, the states can be guaranteed to occur within a valid schedule only at times when the schedule switches from allowing traffic to cross the intersection on one of the roads to allowing traffic to cross on the other road. Because the parts of the schedule between two states are more complicated, the dynamic programming algorithm for stringing together states into an optimal schedule is also more complicated.

To be specific, we compute (as before) the minimum time $t_d(s)$ at which a valid schedule can reach state $s$, with maximum delay at most $d$ on the platoons that cross the intersection as part of state~$s$.  To compute $t_d(s)$, we examine each state $s'$ that differs from state $s$ only by traffic on (all four lanes of) a single road. A valid schedule for $s$ can be obtained from the optimal schedule for $s'$ (the schedule that achieves completion time $t_d(s')$ by greedily scheduling the remaining traffic by which $s$ differs from $s'$, scheduling each platoon as soon as it is released or otherwise available to cross the intersection, as long as this greedy schedule also achieves maximum delay at most $d$. The optimal completion time $t_d(s)$ is the minimum, over all of the $O(n^2)$ potential predecessor states $s'$, of the completion time obtained by appending this greedy schedule (whenever it is valid) to the optimal schedule for~$s'$.

\begin{lemma}
Given a state $s$ and a potential predecessor state $s'$, form a schedule for $s$ by appending a greedy schedule for the remaining cars to a schedule for $s'$ that obtains completion time $t_d(s')$. Then it is possible to test whether the schedule for $s$ obtained in this way has maximum delay at most $d$, and to compute the completion time of the resulting schedule, in time $O(1)$.
\end{lemma}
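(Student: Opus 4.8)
The plan is to reduce the question to an explicit formula for the greedy schedule on a single lane and then to read off the required statistics in constant time from a small amount of precomputed data. First I would observe that all the traffic by which $s$ differs from $s'$ lies on the two incoming lanes (the two travel directions) of a single road, and that lanes on the same road do not interfere with each other. Hence the appended greedy schedule treats these two lanes completely independently: on each lane it crosses a contiguous block of platoons, none starting before the time $T = t_d(s')$ at which $s'$ is reached, with each platoon crossing at the earliest time permitted by its release time, its predecessor's completion on the same lane, and the lower bound $T$. It therefore suffices to analyze one lane --- obtaining its block's maximum delay and completion time in $O(1)$ --- and then combine the two lanes: the appended schedule has delay at most $d$ iff both lanes do, and its completion time is the larger of the two lane completion times.

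For a single lane, index the platoons of its block in release order as $a,a+1,\dots,b$, write $\ell_i$ and $r_i$ for their lengths and release times, and let $L_i=\sum_{j\le i}\ell_j$ denote prefix sums of lengths. Unrolling the greedy recurrence $f_i=\max(f_{i-1},r_i)+\ell_i$ from $f_{a-1}=T$ yields the completion time of platoon $i$ as $f_i=L_i+\max\bigl(T-L_{a-1},\,\max_{a\le k\le i}(r_k-L_{k-1})\bigr)$, and hence a delay of $\max\bigl(T-L_{a-1},\,\max_{a\le k\le i}(r_k-L_{k-1})\bigr)-(r_i-L_{i-1})$. The crucial simplification is that the initial-disjointness requirement forces $r_{i+1}\ge r_i+\ell_i$, so that $g_i=r_i-L_{i-1}$ is nondecreasing in $i$; the inner maximum is then simply $g_i$, and the delay of platoon $i$ reduces to $\max(T-L_{a-1}-g_i,\,0)$. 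Because $g_i$ is nondecreasing this quantity is nonincreasing in $i$, so the block's maximum delay is attained at its first platoon and equals $\max(T-r_a,\,0)$, while its completion time (the same formula at $i=b$) equals $\max\bigl(T+(L_b-L_{a-1}),\,r_b+\ell_b\bigr)$.

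Given these formulas the lemma follows at once. I would precompute, once per lane, the prefix sums $L_i$ in $O(n)$ total time, so that for any block the values $r_a$, $r_b$, $\ell_b$, and $L_b-L_{a-1}$ are available by lookup; the block endpoints $a$ and $b$ on each of the road's two lanes are read directly from $s$ and $s'$. Then, given $T=t_d(s')$, testing validity reduces to checking the comparison $T\le r_a+d$ on each lane (two comparisons total), and the completion time is two evaluations of the completion formula followed by a single maximum --- all $O(1)$. I would also note that the only use made of the parameter $d$ in this test is the comparison of $d$ against the computed value $T-r_a$, matching the restricted form of decision step required by the parametric-search framework of Lemma~\ref{lem:parametric-search}.

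The step I expect to be the main obstacle --- and the crux of the argument --- is recognizing that the disjointness of the platoons' release intervals makes $g_i=r_i-L_{i-1}$ monotone, which localizes the worst-case delay of each block to its first platoon and thereby eliminates any need for range-maximum queries over variable blocks; without this observation the natural approach would incur a per-transition cost growing with the block length rather than $O(1)$.
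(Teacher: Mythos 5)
Your proof is correct and takes essentially the same approach as the paper's: it locates the maximum delay at the first platoon of each greedily scheduled lane, with value $\max(0,t_d(s')-r_a)$, and computes the completion time in $O(1)$ from precomputed prefix sums (you use prefix sums of platoon lengths where the paper uses prefix sums of the inter-platoon gaps, an equivalent bookkeeping choice). Your monotonicity argument for $g_i=r_i-L_{i-1}$, derived from the disjointness requirement, is a welcome explicit justification of the claim---asserted without proof in the paper---that the first platoon of each block is the most heavily delayed one.
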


\begin{proof}
By assumption, the part of the schedule for the platoons in $s'$ has maximum delay at most $d$. Among the remaining platoons, the most heavily delayed will be the first ones to go in the two lanes controlled by the greedy schedule. For each of these two platoons, we can calculate its delay as $\max(0,t_d(s')-r_i)$ where $r_i$ is the release time of the platoon.

The completion time of the schedule is the maximum, over the two lanes controlled by the greedy part of the schedule, of the end time of the last platoon plus the amount by which that platoon was delayed. If the delay of the first greedily-scheduled platoon on the lane is $d$, then the delay of the last platoon on the same lane is $\max(0,d-\sum g_i)$ where the numbers $g_i$ are the gaps between the end time of one platoon and the start time of the next platoon, for the platoons scheduled on that lane by the greedy algorithm. If we store the prefix sums of the gaps, we can calculate the sum of the gaps for any contiguous interval of platoons in constant time, by subtracting the prefix sum up to the first platoon from the prefix sum for the last platoon.
\end{proof}

\begin{lemma}
\label{lem:2wc-decision}
For a two-way crossing platoon scheduling problem as described above, and a given parameter $d$, it is possible to determine in time $O(n^6)$ whether a valid schedule with delay at most $d$ exists.
\end{lemma}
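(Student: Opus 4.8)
The plan is to prove Lemma~\ref{lem:2wc-decision} by completing the dynamic program sketched in the paragraphs preceding it and then bounding its running time. Concretely, I would fill in the table $t_d(s)$ over all $O(n^4)$ states $s$, where $t_d(s)$ is the earliest completion time of a valid partial schedule that reaches $s$ while keeping the maximum delay of every platoon crossing within it at most $d$ (and $+\infty$ if no such partial schedule exists). I process the states in increasing order of the total number of platoons that have crossed, so that every potential predecessor of $s$ is handled before $s$ itself; the base case is the empty state, with $t_d$ set to $-\infty$ so that the first platoon on each lane in the first phase is free to start at its own release time. For each $s$ I range over the two roads and, for the chosen road $R$, over every predecessor $s'$ that agrees with $s$ on the other road's two lanes and has no-larger prefixes on $R$'s two lanes; there are $O(n^2)$ such predecessors in total. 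For each predecessor I append a greedy road-$R$ phase beginning at time $t_d(s')$ and, invoking the preceding constant-time transition lemma together with the precomputed per-lane prefix sums of gaps, test in $O(1)$ time whether that phase keeps the maximum delay at most $d$ and, if so, read off its completion time. I set $t_d(s)$ to the minimum valid completion time so obtained.

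The running-time bound then follows by bookkeeping. The one-time per-lane precomputation (the release-time greedy schedule and the prefix sums of its gaps) costs $O(n)$. The main loop visits $O(n^4)$ states, examines $O(n^2)$ predecessors per state, and spends $O(1)$ per predecessor, for $O(n^6)$ total, which dominates. Finally, a valid schedule of delay at most $d$ exists if and only if the state in which all platoons have crossed has a finite value of $t_d$, a single table lookup. Hence the decision procedure runs in $O(n^6)$ time.

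The substance of the argument is the correctness of the recurrence, which I would establish by induction on the number of crossed platoons. In one direction, every schedule produced by the recurrence is valid and has delay at most $d$ by construction, since each appended phase is tested explicitly. For the converse, I take an optimal valid schedule of delay at most $d$ reaching $s$ and argue it decomposes at its road-switch times: because the model allows no clearance gap between the roads and traffic on opposite roads must occupy disjoint \emph{open} intervals, the schedule partitions into maximal single-road phases, and every phase boundary is a state, since no platoon is partway across at a switch. The last phase uses some road $R$ and starts from a state $s'$ differing from $s$ only on $R$; the prefix of the schedule up to $s'$ is itself a valid delay-$\le d$ partial schedule, so by induction its completion time is at least $t_d(s')$. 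A standard exchange argument then shows that, within the last phase, scheduling each road-$R$ platoon as early as possible neither increases any delay nor increases the phase's completion time, so the greedy phase appended to the optimal $s'$-schedule reaches $s$ no later than the optimal schedule does. This matches the value assigned by the recurrence, closing the induction.

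I expect the main obstacle to be exactly this exchange argument for greedy-within-a-phase, together with the claim that restricting attention to road-switch states loses no optimality. In particular I must rule out that an optimal schedule benefits from artificially delaying a road-$R$ platoon inside a phase (say, to better align a later road-$R'$ phase) in a way the phase-by-phase greedy cannot reproduce. The resolution is that within a single road's phase the two lanes are independent, each behaving like a single Y-merge lane (cf.\ Lemma~\ref{lem:greedy-Y}) in which delaying the first platoon only propagates forward and is absorbed by that lane's gaps exactly as quantified by the preceding transition lemma; and any advantage of such a delay is already captured by choosing a different, later-reached predecessor $s'$ in the minimization. A secondary technical point I would verify is that the one-time per-lane precomputation supplies the sum of gaps over any contiguous block of platoons in $O(1)$ time, which is precisely the interface the preceding transition lemma assumes.
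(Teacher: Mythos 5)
Your proposal is correct and follows essentially the same route as the paper: the same dynamic program over the $O(n^4)$ road-switch states, the same $O(n^2)$ single-road predecessors per state, and the same $O(1)$ greedy-phase transition via prefix sums of gaps, multiplied to give $O(n^6)$. The paper's own proof of this lemma is just the product of those three bounds (the correctness of the phase decomposition and of greedy scheduling within a phase is left implicit in the surrounding discussion), so your more explicit exchange/induction argument simply fills in details the paper omits rather than taking a different approach.
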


\begin{proof}
There are $O(n^4)$ states~$s$, $O(n^2)$ predecessor states $s'$ per state, and $O(1)$ time to perform the greedy scheduling algorithm that augments a schedule for $s'$ to a schedule for~$s$.
Multiplying these terms together gives $O(n^6)$.
\end{proof}

\begin{theorem}
A minimum-delay schedule for a platoon scheduling problem with an intersection configured as a two-way crossing and with $n$ platoons can be found in time $O(\min(n^{12},n^6\log L))$.
\end{theorem}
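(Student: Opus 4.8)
The plan is to derive the stated bound as an immediate consequence of the parametric search technique of Lemma~\ref{lem:parametric-search}, instantiated with the dynamic programming decision procedure of Lemma~\ref{lem:2wc-decision}. That decision algorithm tests, for a given delay parameter $d$, whether a valid schedule of delay at most $d$ exists, and it runs in time $D(n)=O(n^6)$. Substituting this into the bound $O(\min(D^2(n),D(n)\log L))$ guaranteed by Lemma~\ref{lem:parametric-search} yields $O(\min(n^{12},n^6\log L))$, which is exactly the claimed running time. So at the level of arithmetic the theorem is a one-line corollary, mirroring the two preceding theorems for the one-way crossing and the $k$-way merge.

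The one thing I would check carefully before invoking Lemma~\ref{lem:parametric-search} is that the decision algorithm of Lemma~\ref{lem:2wc-decision} genuinely conforms to the interface required by the parametric search framework: the only dependence on the parameter $d$ must be through comparisons of the form $d\ge c$, where $c$ is a quantity computed by the rest of the algorithm from the non-$d$ inputs. I would trace through the recurrence for $t_d(s)$ and confirm that $d$ enters only in the per-transition feasibility tests, where we ask whether appending the greedy schedule from a predecessor state $s'$ keeps the maximum delay at most $d$. By the preceding lemma, each such test reduces to comparing $d$ against the first-platoon delays $\max(0,t_d(s')-r_i)$ and against the last-platoon delays obtained from the prefix sums of gaps; every one of these is a comparison of $d$ against a computed number, so the comparison-only requirement holds.

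I do not expect a genuine obstacle here. The only conceptual point worth noting is that the comparison values themselves depend on $d$ through the completion times $t_d(s')$, so when the simulation is run on the unknown optimal delay $d^*$ these intermediate values are initially unknown; but this is exactly the scenario Lemma~\ref{lem:parametric-search} is designed to handle, since each comparison is resolved in the order the algorithm performs it, at which point the value $c$ has already been pinned down by the previously resolved comparisons. Once this interface check is in place, the time bound follows mechanically by substituting $D(n)=O(n^6)$, and the correctness of the resulting optimization algorithm follows from the correctness of the parametric search simulation together with the correctness of the decision algorithm of Lemma~\ref{lem:2wc-decision}.
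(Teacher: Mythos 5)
Your proposal is correct and follows exactly the paper's own argument: the theorem is obtained by plugging the $O(n^6)$ decision algorithm of Lemma~\ref{lem:2wc-decision} into the parametric search framework of Lemma~\ref{lem:parametric-search}, yielding $O(\min(n^{12},n^6\log L))$. Your additional check that the decision algorithm depends on $d$ only through comparisons is a sensible verification of the framework's interface requirement, but it does not change the substance of the argument.
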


\begin{proof}
We apply the parametric search technique of 
Lemma~\ref{lem:parametric-search},
using the dynamic programming algorithm of 
Lemma~\ref{lem:2wc-decision} as the decision algorithm.
\end{proof}


\section{Hardness}
In this section, we show that combining the two way intersection
and multiway merge versions leads to an NP-complete version of the
problem.  Specifically, let us consider a version 
of the problem where an arbitrary number of
lanes are merging onto one outgoing lane, one lane of traffic is
going in the opposite direction, and one lane of traffic crosses
these two.  The multilane merge and the lane of traffic in the
opposite direction can both use the intersection simultaneously,
but neither of them can use the intersection when a platoon on the
third lane travels through the intersection.
(See Figure~\ref{fig:reduction}.)

\begin{figure}[hbt]
\centering
\includegraphics[width=0.5\textwidth, trim=0in 1in 0in 1in, clip]{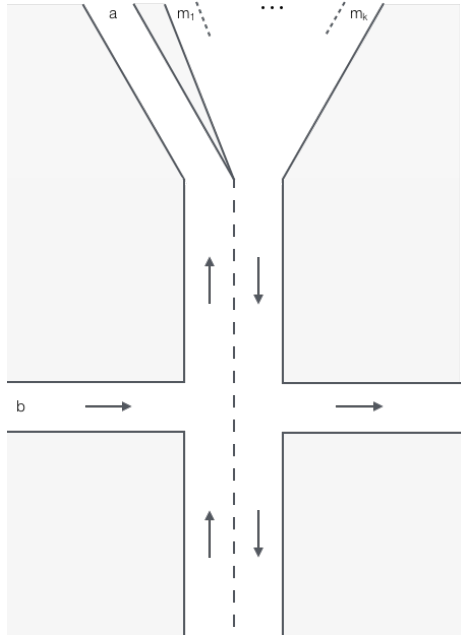}
\caption{The intersection used in our hardness proof.}
\label{fig:reduction}
\end{figure}

To precisely specify the MULTI-CROSS problem, we name the roads as follows:
\begin{itemize}
\item The multiway merge has $k$ incoming lanes $m_1,m_2,\dots,m_k$ for some input parameter $k$
\item the street parallel to the multiway merge is $a$
\item the street crossing these two is $b$.
\end{itemize}
No two platoons one on $m_i$ and one on $m_j$ can enter the merge at the same time.  If a platoon on $b$ is in the intersection, no other platoons can be passing through as well.  The platoons on $a$ do not interfere with any platoon on an $m_i$ lane.

An instance of the MULTI-CROSS problem is defined by a value for $k$, a set of platoons $P = [ (r_1,s_1,t_1),\dots (r_n,s_n,t_n)]$ assigned to the roads and their arrival and exit times at the intersection (platoon $p_i$ is on road $r_i$ and arrives at the intersection at time $s_i$ and would exit the intersection at time $t_i$ if there is no delay), and a maximum delay parameter $d_{\max}$. The decision problem is to decide whether or not there is a schedule of the platoons through the intersection such that no platoon experiences a delay more than $d_{\max}$.  Any schedule can be simulated to check for validity and compute the maximum delay; hence, MULTI-CROSS is in NP.

To show the MULTI-CROSS problem is NP-hard, we reduce the \textsc{PARTITION} problem to it.  The \textsc{PARTITION} problem is to given a multiset of positive integers $X = \lbrace x_1,\dots,x_\ell\rbrace$, decide whether or not they can be partitioned into two sets $U$ and $V$ such that $\sum_{x\in U} x = \sum_{x\in V}x$.  \textsc{PARTITION} is known to be NP-complete~\cite{Garey:1990:CIG:574848}.

Given an instance of the \textsc{PARTITION} problem $X$, we build an instance of the MULTI-CROSS problem as follows.  Let $q=\frac{\sum_{x \in X} x}{2}$ and set $d_{\max} = 2q+1$ and $k = \ell+1$.  We place one long platoon on $a$ and one short platoon on $b$: $p_1 = (a,0,4(q+1))$ and $p_2 = (b,2q,2q+1)$.  Then one other long platoon is placed on one of the multiway merge lanes: $p_3 = (m_{\ell+1},q,q+4(q+1))$.  Finally for each integer $x_i \in X$, place one platoon on lane $m_i$: $p_{3+i} = (m_i,q,q+x_i)$.

\begin{lemma}\label{lem:part-to-sched}
If there is a valid partitioning of $X$, then there is a viable schedule with maximum delay $2q+1$.
\end{lemma}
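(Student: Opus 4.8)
The plan is to prove this (the ``easy'' direction of the reduction) by exhibiting an explicit schedule built directly from the partition and then verifying that it is valid with maximum delay exactly $2q+1=d_{\max}$. The guiding intuition is that the short platoon $p_2$ on lane $b$, which blocks the entire intersection during the single time unit it crosses, acts as a divider that splits the merge traffic into a ``before'' window and an ``after'' window, each of length exactly $q$; the partition of $X$ into $U$ and $V$ tells us which merge platoons to place in each window, since $\sum_{x\in U}x=\sum_{x\in V}x=q$.

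Concretely, I would schedule the platoons as follows. First, place the merge platoons $\{p_{3+i} : x_i\in U\}$ back-to-back starting at their common release time $q$; since their total length is $\sum_{x\in U}x=q$, they exactly fill $[q,2q]$. Next, let $p_2$ cross at its release time, occupying $[2q,2q+1]$, and let the long platoon $p_1$ on lane $a$ begin crossing at $2q+1$, occupying $[2q+1,6q+5]$. Then place the remaining merge platoons $\{p_{3+i} : x_i\in V\}$ back-to-back starting at $2q+1$; since their total length is $q$ they fill $[2q+1,3q+1]$. Finally, let $p_3$ on lane $m_{\ell+1}$ begin crossing at $3q+1$, occupying $[3q+1,7q+5]$.

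Next I would verify validity by checking each pair of interference constraints. The only compatible pairs are lane $a$ with every merge lane $m_i$, and the schedule exploits this twice: $p_1$ is allowed to overlap both the $V$-platoons and $p_3$ precisely because $a$ does not interfere with any $m_i$. Every incompatible pair is separated in time. All merge platoons lie on distinct lanes and hence are mutually incompatible, so they must be totally serialized; in the schedule they occupy the disjoint blocks $[q,2q]$ and $[2q+1,3q+1]$ followed by $p_3$ on $[3q+1,7q+5]$, with $p_3$ and the $V$-platoons abutting only at the shared endpoint $3q+1$. Platoon $p_2$ on $b$ conflicts with everything, but its open crossing interval $(2q,2q+1)$ sits exactly between the $U$-block (ending at $2q$) and the simultaneous start of $p_1$ and the $V$-block (at $2q+1$), so it meets no other open crossing interval.

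The last step is to bound the delays and confirm none exceeds $2q+1$. The forced placements give $p_2$ delay $0$, while $p_1$ and $p_3$ each begin $2q+1$ after their release and so have delay exactly $2q+1$. Within the $U$-block the largest delay is that of its last platoon, namely $q$ minus its own length, hence at most $q$; within the $V$-block the largest delay is $2q+1$ minus the last platoon's length, hence at most $2q$. All delays are therefore at most $2q+1$, so the schedule is viable with maximum delay $2q+1$. I expect the only real care to lie in the bookkeeping of the boundary points $2q$, $2q+1$, and $3q+1$ — ensuring that abutting blocks share endpoints but have disjoint \emph{open} crossing intervals — and in the tight arithmetic that makes each of the two windows have length exactly $q$; the genuinely subtle part of the overall reduction (showing that any valid schedule \emph{forces} this two-window structure, and hence a partition) belongs to the converse lemma rather than this one.
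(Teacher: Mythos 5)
Your proposal is correct and follows essentially the same construction as the paper: send the $U$-platoons through the merge first, let $p_2$ cross at its release time $2q$, start $p_1$ and the $V$-platoons simultaneously at $2q+1$, and finish with $p_3$; the delay accounting matches (your version is in fact more careful, as the paper's proof contains a slip stating $p_1$ has waited $q+1$ rather than $2q$ units when $p_2$ arrives, though the claimed final delay of $2q+1$ for $p_1$ is the same in both).
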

\begin{proof}
Let $U$ and $V$ be such a partitioning of $X$. Assign the platoons who correspond to integers in $U$ go through the merge first.  Once the short platoon $p_2$ arrives on road $b$ have it cross immediately.  At this point platoon $p_1$ will have been waiting for exactly $q+1$ units of time and must cross immediately after $p_2$.  Simultaneously set the platoons associated with the integers in $V$ to merge.  Finally once every other platoon has merged, send platoon $p_3$ through the merge.

Platoon $p_1$ has a delay of exactly $2q+1$ in this schedule, $p_2$ has a delay of $0$, and platoon $p_3$ also has a delay of exactly $2q+1$.  The platoons $p_4, \dots , p_{3+\ell}$ all had delays less than $2q+1$.  Therefore the maximum delay of this schedule is $2q+1$.
\end{proof}

\begin{lemma}\label{lem:sched-to-part}
If there is a schedule with maximum delay at most $2q+1$, then there is a valid partitioning of $X$.
\end{lemma}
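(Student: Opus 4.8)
The plan is to show that the delay bound forces every valid schedule into essentially the rigid shape produced by the construction of Lemma~\ref{lem:part-to-sched}, and then to read off a partition from the resulting forced split of the small merge platoons. Throughout I write $c_i$ for the crossing time of platoon $p_i$, so that $p_i$ occupies the intersection during $[c_i, c_i + (t_i-s_i)]$ with delay $c_i - s_i \le 2q+1$. I would first pin down $p_1$ and $p_2$. The release times and the bound $d_{\max}=2q+1$ give $c_1\in[0,2q+1]$ and $c_2\in[2q,4q+1]$, and since $a$ and $b$ conflict the intervals $[c_1,c_1+4(q+1)]$ and $[c_2,c_2+1]$ must be disjoint. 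The option $c_2\ge c_1+4(q+1)$ is impossible, as it would force $c_2>4q+1$, so $c_2+1\le c_1$; combined with the two delay windows this squeezes out $c_2=2q$ and $c_1=2q+1$. Hence in every valid schedule the short platoon on $b$ occupies exactly $[2q,2q+1]$ and the long platoon on $a$ starts the instant $b$ clears.

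Next I would locate $p_3$ and confine the small platoons. Because $p_3$ lies on a merge lane it conflicts with $p_2$, so $[c_3,c_3+4(q+1)]$ must avoid $[2q,2q+1]$; its length rules out finishing before $2q$, forcing $c_3\ge 2q+1$, while the delay bound gives $c_3\le 3q+1$. Each small platoon $p_{3+i}$ (release $q$, length $x_i$) conflicts with both $p_2$ and $p_3$ and is mutually exclusive from the other small platoons. Avoiding the block $(2q,2q+1)$ forces each such platoon either to finish by time $2q$, placing it in the window $[q,2q]$, or to start at or after $2q+1$; in the latter case the enormous interval of $p_3$ together with the delay bound forces it to finish by $c_3$, placing it in the window $[2q+1,c_3]$. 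Slipping a small platoon in after $p_3$ has cleared would require a crossing time of at least $6q+5$, a delay far exceeding $2q+1$.

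Finally I would run a capacity count. The two windows have lengths $q$ and $c_3-(2q+1)\le q$, so their combined capacity is at most $2q$; since the small platoons are pairwise disjoint with total length $\sum_i x_i = 2q$, both windows must be packed to capacity, forcing $c_3=3q+1$ and, in particular, forcing the platoons lying in $[q,2q]$ to have total length exactly $q$. Taking $U$ to be the corresponding subset of $X$ and $V=X\setminus U$ then gives $\sum_{x\in U}x = q = \tfrac12\sum_{x\in X}x$, the desired partition.

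I expect the main obstacle to be the second step: showing that every small merge platoon really is trapped in one of the two short windows. This requires simultaneously juggling the $b$-block at $[2q,2q+1]$, the long occupied interval of $p_3$, and the per-platoon delay constraint, and in particular closing off the escape route of scheduling a small platoon after $p_3$ has finished. One must also take care with the open-versus-closed-interval distinction in the non-interference conditions, although because all times are integers the boundary cases (such as a platoon ending exactly at $2q$) are legal and leave the counting argument untouched.
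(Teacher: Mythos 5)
Your proof is correct and follows essentially the same approach as the paper's: force $p_2$ to occupy exactly $[2q,2q+1]$, show $p_3$ must cross last among the merge platoons and by time $3q+1$, and conclude that the small platoons split into two windows of length $q$ each, yielding the partition. Your version is in fact more explicit than the paper's (which argues the same facts more informally via the observation that no platoon of length exceeding $2q+1$ may block a waiting platoon), and the capacity-count at the end is a clean way to finish.
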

\begin{proof}
If any platoon blocks an intersection for more than $2q+1$ units of time while another platoon is waiting, then the maximum delay must be more than $2q+1$.  Therefore $p_1$ cannot go until $p_2$ passes and $p_3$ cannot go until $p_4,\dots,p_{3+\ell}$ all go through the merge.  Because $p_1$ has been waiting for $2q$ time units when $p_2$ arrives at the intersection, $p_2$ must immediately enter the intersection otherwise $p_1$ will delay more than $2q+1$ time units.  The platoons $p_4,\dots,p_{3+\ell}$ must all clear the intersection before time $3q+1$ for $p_3$ to enter the intersection with a delay of at most $2q+1$.  They arrive at time $q$, have $q$ time before $p_2$ must go through, and have $q$ time after before $p_3$ must go through.  Therefore if it is possible to route these platoons through their merge, then the two sets of platoons who go through the merge before or after $p_2$ each must sum to exactly $q$.  So the two sets of integers these two sets of platoons correspond to are a valid partitioning of $X$.
\end{proof}

\begin{theorem}
The MULTI-CROSS problem is NP-complete.
\end{theorem}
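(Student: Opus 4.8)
The plan is to verify the two standard ingredients of an NP-completeness proof, membership in NP and NP-hardness, and then assemble the pieces that are already in hand. Membership in NP was argued in the text preceding the construction: given a candidate schedule (a crossing time for each platoon), we can simulate it, check the three validity conditions, and compute the maximum delay, all in polynomial time, so a valid schedule of delay at most $d_{\max}$ serves as a polynomial-size, polynomial-time-checkable certificate. This direction needs nothing further.

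For NP-hardness I would exhibit the reduction from \textsc{PARTITION} exactly as constructed above and argue two things: that it is computable in polynomial time, and that it is correct. Correctness is immediate from the two lemmas already established. Lemma~\ref{lem:part-to-sched} gives the forward direction, that a valid partition of $X$ into $U$ and $V$ yields a schedule of delay $2q+1 = d_{\max}$, and Lemma~\ref{lem:sched-to-part} gives the converse, that any schedule of delay at most $2q+1$ forces the platoons crossing before and after the short platoon $p_2$ to split the integers of $X$ into two groups each summing to $q$, i.e.\ a valid partition. Together they establish that $X$ is a yes-instance of \textsc{PARTITION} if and only if the constructed instance, with $d_{\max}=2q+1$, is a yes-instance of MULTI-CROSS.

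The one point I expect to deserve genuine care, and the only place with any subtlety, is the polynomial-time bound on the reduction, since \textsc{PARTITION} is a number problem and one must confirm that the construction does not implicitly blow up the input. Here the reduction produces only $n=\ell+3$ platoons, the single parameter $k=\ell+1$, and release and exit times all bounded by $q+4(q+1)=O(q)$, where $q=\tfrac{1}{2}\sum_{x\in X}x$. Because $q$ is at most the sum of the input integers, its binary encoding has length polynomial in the encoding length of $X$, and the same holds for every time value produced; hence the whole MULTI-CROSS instance has size polynomial in the size of $X$ and is trivially computable in polynomial time, making the reduction a legitimate polynomial-time many-one reduction. Combining NP membership with this reduction from the NP-complete problem \textsc{PARTITION} then yields the theorem. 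I do not anticipate any substantive difficulty: the real modeling work is entirely contained in Lemmas~\ref{lem:part-to-sched} and~\ref{lem:sched-to-part}, and the theorem itself is a routine assembly of membership, polynomial encoding size, and the biconditional supplied by those two lemmas.
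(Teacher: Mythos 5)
Your proposal is correct and matches the paper's proof, which likewise just combines the NP-membership observation with Lemmas~\ref{lem:part-to-sched} and~\ref{lem:sched-to-part}. Your extra check that the reduction is polynomial-time (since $q$ has encoding length polynomial in that of $X$) is a worthwhile detail the paper leaves implicit, but it does not change the approach.
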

\begin{proof}
By Lemmas~\ref{lem:part-to-sched}~and~\ref{lem:sched-to-part}, and
our observation that MULTI-CROSS is in NP.
\end{proof}

\section{Conclusion and Future Work}
We have studied several scheduling problems for routing
autonomous vehicle platoons through an unregulated intersection, providing
polynomial-time algorithms for the cases of a $k$-way merge (for constant $k$)
and for a crossing involving two-way traffic.
We have also provided an NP-completeness result for instances of the problem
that involve a $k$-way merge (for non-constant $k$) and two-way traffic.
We leave as open problem to determine whether the $k$-way merge version
of problem for non-constant $k$ is NP-complete or whether there is a
polynomial-time algorithm for solving this problem.
In addition, we studied offline versions of all these problems, and it may
be interesting to study online versions, where the platoons and 
their desired paths are not all known in advance.

{\raggedright
\bibliographystyle{plainurl}
\bibliography{refs}
}

\end{document}